\documentclass{article}
\usepackage{graphicx}
\usepackage{amsmath,amsthm,amssymb,amsfonts}
\usepackage{verbatim}
\usepackage{stmaryrd} 
\usepackage{hyperref}

\usepackage[colorinlistoftodos]{todonotes}


\theoremstyle{plain}
\newtheorem{thm}{Theorem}

\newtheorem{cor}[thm]{Corollary}

\theoremstyle{definition}

\numberwithin{thm}{section}

\begin{document}
\title{Coarse Grained Parallel Selection}
\author{Laurence Boxer
         \thanks{
    Department of Computer and Information Sciences,
    Niagara University,
    Niagara University, NY 14109, USA;
    and Department of Computer Science and Engineering,
    State University of New York at Buffalo.
    E-mail: boxer@niagara.edu
    }
}

\date{ }
\maketitle

\begin{abstract}
Several efficient, but non-optimal, solutions to the Selection Problem on coarse grained
parallel computers have appeared in the literature. We consider the example of 
the Saukas-Song algorithm; we analyze it without expressing the running time in terms of 
communication rounds. This shows that while in the best case the Saukas-Song algorithm 
runs in asymptotically optimal time, in general it does not. We propose another 
algorithm for coarse grained selection that has optimal expected running time.

Key words and phrases: selection problem, coarse grained multicomputer, 
uniform distribution, Chebyshev's inequality

\end{abstract}

\section{Introduction}
Several algorithms for solution of the Selection Problem on coarse grained
parallel computers, including those of~\cite{AlFur,Bader,Fujiwara,Gerbessiotis,Ishimizu,SS99,Tiskin},
have been proposed. Among these are algorithms that are efficient but not
asymptotically optimal.

We consider for example the paper~\cite{SS99}, by Saukas and Song, who
present the analysis of their algorithm in terms of the amount of time spent in
local sequential operations and the number of communications rounds. 
In the current paper, we replace analysis of the number of communications rounds
with an analysis of their running times, giving us asymptotic analysis of the 
running time for the entire algorithm. This lets us show that although
the Saukas-Song algorithm is efficient, it is not optimal. 
We show that an algorithm for coarse grained parallel
selection that is not asymptotically optimal can be modified to obtain an algorithm
with asymptotically optimal expected running time.

\section{Preliminaries}
\subsection{Model of Computation}
Material in this section is quoted or paraphrased from~\cite{BMR99}.

The {\em coarse grained multicomputer} model, or $CGM(n,p)$, considered
in this paper, has $p$ processors with $\Omega(n/p)$ local memory
apiece - i.e., each processor has $\Omega(n/p)$ memory cells of 
$\Theta(\log n)$ bits apiece. The processors may be connected to
some (arbitrary) interconnection network (such as a mesh, hypercube,
or fat tree) or may share global memory. A processor may exchange
messages of $O(\log n)$ bits with any immediate neighbor
in constant time. In determining time complexities, we consider
both local computation time and interprocessor communication
time, in standard fashion. The term ``coarse grained'' means that
the size $\Omega(n/p)$ of each processor's memory is
``considerably larger'' than $\Theta(1)$; by convention, we usually assume
$n/p \ge p$ (equivalently, $n \ge p^2$), but will occasionally assume other relations
between $n$ and $p$, typically such that each processor has
at least enough local memory to store the ID number of every other
processor. For more information on this model and associated
operations, see~\cite{DFR96}.

\subsection{Terminology and notation}
We say an array $list[1 \ldots n]$ is {\em evenly distributed} in a $CGM(n,p)$
if each processor has $\Theta(n/p)$ members of the array.

We use the notation $\#X$ for the number of members of the set~$X$.

\subsection{Semigroup operations}

Let $X = \{x_j\}_{j=1}^n$ be a set of data values and let
$\circ$ be a binary operation on $X$. A semigroup operation computes
$x_1 \circ x_2 \circ \ldots \circ x_n$. Examples of such operations include
{\em sum, average, min}, and {\em max}.

\begin{thm}
{\rm \cite{BM04}}
\label{semigroup}
Let $X = \{x_j\}_{j=1}^n$ be a set of data values distributed $\Theta(n/p)$
per processor in a $CGM(n,p)$. Then the semigroup computation of
$x_1 \circ x_2 \circ \ldots \circ x_n$
can be performed in $\Theta(n/p)$ time. At the end of this algorithm,
all processors hold the value of $x_1 \circ \ldots \circ x_n$. $\Box$
\end{thm}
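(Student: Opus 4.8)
The plan is to split the computation into a purely local phase followed by an interprocessor combining phase, and then to exploit the coarse-grained assumption $n \ge p^2$ to show that the communication cost is dominated by the local cost. First I would have each processor apply $\circ$ to its own $\Theta(n/p)$ data values, in any fixed order, producing a single \emph{partial result} per processor; since $\circ$ is applied $\Theta(n/p)$ times and each application is a constant-time operation on a $\Theta(\log n)$-bit cell, this phase costs $\Theta(n/p)$ time and leaves exactly $p$ values in the machine, one per processor.

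The second phase combines these $p$ partial results. Because the interconnection network is arbitrary, I would fix a spanning tree of the network (or of the processor set, in the shared-memory case) rooted at an arbitrary processor; any connected network on $p$ nodes admits such a tree of depth $O(p)$. I would then run a convergecast: each processor, once it has received the combined values of all its children, applies $\circ$ to those values together with its own partial result and forwards the single resulting value to its parent. The essential point is that associativity of the semigroup operation lets each tree edge carry exactly one message, so the convergecast reaches the root in time proportional to the tree depth plus the total combining work, i.e.\ $O(p)$. A symmetric broadcast down the same tree then delivers the final value $x_1 \circ \cdots \circ x_n$ to every processor, again in $O(p)$ time, which establishes the claim that all processors hold the value at termination.

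To finish the upper bound I would invoke the coarse-grained hypothesis $n \ge p^2$, equivalently $p \le n/p$, so that the $O(p)$ communication cost of the combining phase is $O(n/p)$ and the total running time is $\Theta(n/p) + O(p) = \Theta(n/p)$. For the matching lower bound, I would observe that every data value must be read and incorporated into the result at least once, so some processor must perform $\Omega(n/p)$ operations on its local share; hence $\Omega(n/p)$ time is unavoidable and the bound is tight.

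The step I expect to be the main obstacle is the combining phase, because the model permits an \emph{arbitrary} interconnection network: I cannot appeal to the logarithmic-depth reduction available on a hypercube nor to a constant-diameter routing scheme. The argument must therefore survive a worst-case topology such as a linear array, and the crux is to confirm that combining partial results \emph{along} the spanning-tree edges keeps the message count on each edge at one, so that the whole convergecast-and-broadcast costs only $O(p)$ rather than $O(p \cdot \mathrm{diam})$. I would also account for the local combining time incurred at high-degree internal nodes, but since these contributions sum to $O(p)$ over the whole tree they do not affect the bound.
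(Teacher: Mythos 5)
Your proof is correct and follows essentially the same route as the source the paper cites for this theorem (\cite{BM04}; the paper itself states it without proof): a local $\Theta(n/p)$ reduction on each processor, then an $O(p)$ spanning-tree convergecast-and-broadcast of the $p$ partial results, with the coarse-grained convention $n \ge p^2$ (equivalently $p \le n/p$) absorbing the communication term, plus the trivial $\Omega(n/p)$ lower bound. Your worry about arbitrary topologies is resolved exactly the right way --- one message per tree edge and total combining work $O(p)$, so the cost is $O(p)$ even if fully serialized --- which is the same topology-independent bound the paper relies on for its $O(p)$ communication primitives in Theorems~\ref{broadcast} and~\ref{gather}.
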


\subsection{Data movement operations}
In the literature of CGM algorithms, many papers, including~\cite{SS99},
analyze an algorithm by combining the running time of sequential computations
with the number of {\em communications rounds}.  A communications round is described
in~\cite{SS99} as an operation in which each processor of a
$CGM(n,p)$ can exchange $O(n/p)$ data with other processors.

However, this definition does not lead to a clear understanding of the 
asymptotic running time of a CGM algorithm. E.g., a communication round could
require a processor to send $\Theta(1)$ data to a neighboring processor, which can be 
done in $\Theta(1)$ time; or, a communication round could require communication of $\Theta(1)$ 
data between diametrically opposite processors of a linear array, which
requires $\Theta(p)$ time.

Further, there is a sense in which the notion of a communication round is not well 
defined. Consider again the example of communication of $\Theta(1)$ data
between diametrically opposite processors $P_1, P_p$ of a linear array in which 
the processors $P_i$ are indexed sequentially, i.e., $P_1$ is adjacent to $P_2$ and $P_i$ 
is adjacent to $P_{i-1}$ and to $P_{i+1}$ for $1 < i < p$.
This communication can be regarded as a single communication round according to the
description given above; or as $p-1$ communication rounds, in the $i^{th}$ of which processor
$P_i$ sends $\Theta(1)$ data to processor $P_{i+1}$, $1 \le i < p$.

Now that more is known about the running times of communications operations in CGM than
when~\cite{SS99} appeared, we can fully analyze the running time of the algorithm. In the remainder of this
section, we discuss running times of communications operations used in the Saukas-Song algorithm.

\begin{thm}
\label{broadcast}
{\rm \cite{BM04}}
A unit of data can be broadcast from one processor to all other processors of a $CGM(n,p)$ in
$O(p)$ time. $\Box$
\end{thm}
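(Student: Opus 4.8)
The plan is to exploit the fact that a broadcast can only be required to succeed when the interconnection network is connected, and that every connected graph on $p$ vertices has a spanning tree with exactly $p-1$ edges and diameter at most $p-1$. The worst case to guard against is the linear array, whose diameter is $p-1$; establishing the bound there essentially establishes it in general. The whole argument is an upper-bound (counting) argument, since the statement only asserts $O(p)$ and no matching lower bound is claimed.

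First I would fix a spanning tree $T$ of the network rooted at the source processor, say $P_s$, which holds the datum to be broadcast. In the shared-memory variant of the model the argument is even simpler, since each processor may read the single shared cell; reading that cell into all $p$ processors, even one at a time, costs only $O(p)$ time, so I would dispose of that case at once and concentrate on the message-passing case. Since $T$ is a tree on $p$ vertices it has exactly $p-1$ edges, and every processor is reachable from $P_s$ along a unique path in $T$.

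Next I would propagate the datum along the edges of $T$, scheduling the $p-1$ transmissions in the order induced by a traversal of $T$ from the root, for instance a pre-order depth-first traversal. The essential invariant is that whenever a tree edge $(u,v)$ with $u$ the parent of $v$ is scheduled, processor $u$ already holds the datum; a pre-order traversal guarantees this because $u$ is visited before its descendants. Each such transmission is between immediate neighbors and so, by the model's assumption, costs $\Theta(1)$ time. Because there are only $p-1$ edges, performing the transmissions one after another costs $O(p)$ time in total, which already yields the claimed bound; allowing transmissions along vertex-disjoint edges to proceed concurrently can only decrease this, so the $O(p)$ estimate is in no danger.

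The step I expect to require the most care is making the bound genuinely topology-independent, since the model permits an adversarial network and one must not accidentally appeal to favorable structure such as the logarithmic diameter of a hypercube. The point to emphasize is that the counting argument above uses only that $T$ has $p-1$ edges and that each edge joins immediate neighbors, both of which hold for every connected network on $p$ processors. Hence the $O(p)$ bound holds uniformly across all admissible interconnection networks, with no assumption on the diameter beyond the trivial bound $p-1$.
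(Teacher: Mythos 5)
Your argument is correct, but note that the paper itself offers no proof to compare against: Theorem~\ref{broadcast} is stated as a citation to~\cite{BM04} and closed with $\Box$ immediately, so the burden is discharged by reference to prior work rather than by an in-paper argument. Your spanning-tree serialization supplies a legitimate self-contained proof of the claimed bound: broadcast is only meaningful on a connected network, a spanning tree rooted at the source has exactly $p-1$ edges, a pre-order schedule guarantees each parent holds the datum before forwarding it, each hop moves a single word ($O(\log n)$ bits) between immediate neighbors in $\Theta(1)$ time by the model's definition, and serializing the $p-1$ hops gives $O(p)$ total. Your handling of the shared-memory variant and your emphasis on topology-independence are both appropriate, since the $CGM(n,p)$ definition allows an arbitrary interconnection network; the only loose remark is the opening claim that settling the linear-array case ``essentially establishes'' the general case, which is not literally a reduction, but your actual argument never relies on it. One point worth making explicit if this were to be written up: the schedule (the tree and traversal order) is fixed by the algorithm designer from the known topology, which is the standard convention in this model; with that said, what you have proved is exactly the statement the paper imports from~\cite{BM04}.
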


Let $S$ be a set of data values distributed, not necessarily evenly, among the processors of
a parallel computer. A {\em gather} operation results in a copy of
$S$ being in a single processor $P_i$, and we say $S$ has been
gathered to $P_i$~\cite{BM04,MiSt96}. We have the following.

\begin{thm}
\label{gather}
{\rm \cite{BM04}}
Let $S$ be a nonempty set of $N$ elementary data values distributed
among the processors of $G$, a $CGM(n,p)$, such that $N=\Omega(p)$ and
$N=O(n/p)$. Then $S$ can be gathered to any processor of $G$ in
$\Theta(N)$ time. $\Box$
\end{thm}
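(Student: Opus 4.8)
\section*{Proof proposal}

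The plan is to prove matching $\Omega(N)$ and $O(N)$ bounds, with the upper bound being the substantive direction. For the lower bound I would exhibit an input in which none of the $N$ values of $S$ reside at the designated target processor $P_t$ initially; this is possible because $S$ is distributed among the $p$ processors and $N=\Omega(p)$ permits spreading all of $S$ over processors other than $P_t$. At termination $P_t$ must hold a copy of all $N$ values in its local memory, and placing $N$ distinct values into distinct memory cells forces $P_t$ to perform at least $N$ elementary local store operations, hence $\Omega(N)$ time, no matter how much communication parallelism the network offers. This already yields the lower bound. I would also point out that the hypothesis $N=O(n/p)$ is exactly what makes the task feasible: it guarantees that the $\Omega(n/p)$ local memory of $P_t$ is large enough to hold the assembled copy of $S$.

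For the upper bound I would route all data toward $P_t$ along a spanning structure of the (connected) interconnection network. First I would fix a spanning tree $T$ rooted at $P_t$; since a connected graph on $p$ vertices has diameter at most $p-1$, the depth $d$ of $T$ satisfies $d=O(p)$. I would then gather along $T$ by a pipelined upward sweep: at each step every processor forwards one as-yet-unsent value (either one of its own or one just received from a child) to its parent, buffering the remainder. Here $N=O(n/p)$ matters a second time, because the number of values buffered at any node never exceeds the size of its subtree, which is at most $N=O(n/p)$, so every intermediate buffer fits in local memory. The case of shared global memory is analogous and easier, since the reading step alone realizes both bounds.

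The heart of the argument is bounding the makespan of this pipelined gather by $O(N+d)=O(N+p)$, which is $O(N)$ precisely because $N=\Omega(p)$, so that the pipeline fill time is absorbed. The cleanest route I know for the $O(N+d)$ bound on an arbitrary tree is to linearize $T$ by an Euler tour, obtaining a closed walk of length $2(p-1)$ along tree edges on which each value travels monotonically toward the root occurrence of $P_t$; on such a path the pipelined gather is elementary, the first value arriving after $O(p)$ steps and one further value arriving at each subsequent step, for a total of $O(N+p)$ steps. I expect the scheduling and congestion bookkeeping to be the main obstacle: each physical tree edge is traversed at most twice by the Euler tour, so at most two virtual links contend for one physical link at any instant, and a constant-factor time-slicing resolves this without changing the asymptotics. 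A direct induction on subtree sizes is an alternative, but it is fussier, since an internal node with several children can forward only one value per step and one must argue that the backlog crossing any edge, being bounded by the $O(N)$ values of its subtree, clears within the claimed time.
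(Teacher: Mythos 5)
The paper itself supplies no proof of Theorem~\ref{gather}: the statement is quoted from~\cite{BM04} (hence the $\Box$ immediately after it), so your argument can only be judged against the standard spanning-tree argument that the citation stands for. Your skeleton is that standard argument, and most of it is sound: the $\Omega(N)$ lower bound from the fact that the target processor must store $N$ words locally, the correct identification of $N=O(n/p)$ as what makes the gathered set and all intermediate buffers fit in local memory, and the reduction of the upper bound to a pipelined convergecast along a rooted spanning tree with makespan $O(N+p)$, which collapses to $O(N)$ precisely because $N=\Omega(p)$.

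The gap is in the step you yourself call the heart of the argument. Your Euler-tour congestion accounting handles only \emph{edge} contention (``each physical tree edge is traversed at most twice''), not \emph{node} contention. A physical processor of degree $\delta$ appears $\Theta(\delta)$ times on the tour, and under your scheme it must, in every virtual time step, receive and relay a value at each of its occurrences; those relays are local operations executed sequentially by one processor (and, in a single-port model, sequential communications as well), so simulating one virtual step costs $\Theta(\delta)$ time, not $O(1)$, and no constant-factor time-slicing repairs this. Concretely, let the interconnection network be a star with the target processor at the center: the star is its own unique spanning tree, the center appears $p-1$ times on the tour, and your routing makes every value bounce in and out of the center $\Theta(p)$ times, so the simulation costs on the order of $(N+p)\,p$ --- even though on this very network a direct gather trivially runs in $\Theta(N)$ time. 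The repair is the route you set aside as fussier: send each value directly up the tree, so that it crosses each edge of its root-path exactly once and never re-enters a node, and stagger the departures farthest-first. Explicitly, if the values are ordered so that their origin depths satisfy $d_{(1)} \ge d_{(2)} \ge \cdots \ge d_{(N)}$, schedule the $j$-th value to move one edge per step without stopping so that it reaches the root at step $d_{(1)}+j-1$; a short check shows no two values ever cross the same edge, or enter or leave the same processor, in the same step, each processor needs only $O(1)$ transit buffer beyond its original data, and the gather completes in $d_{(1)}+N-1 \le N+p-1 = O(N)$ steps. That conflict-free schedule, not the Euler-tour simulation, is what actually closes the upper bound.
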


\section{Analysis of the Saukas-Song algorithm}
The algorithm of~\cite{SS99} is given in Figure~\ref{alg-fig}. Note it is assumed
in~\cite{SS99} that $n > p^2 \log p$. We will refer
to the steps of the algorithm as labeled in this figure. For convenience, we will take
$c=1$ in step~(2).

\begin{figure}
\includegraphics[height=3.25in]{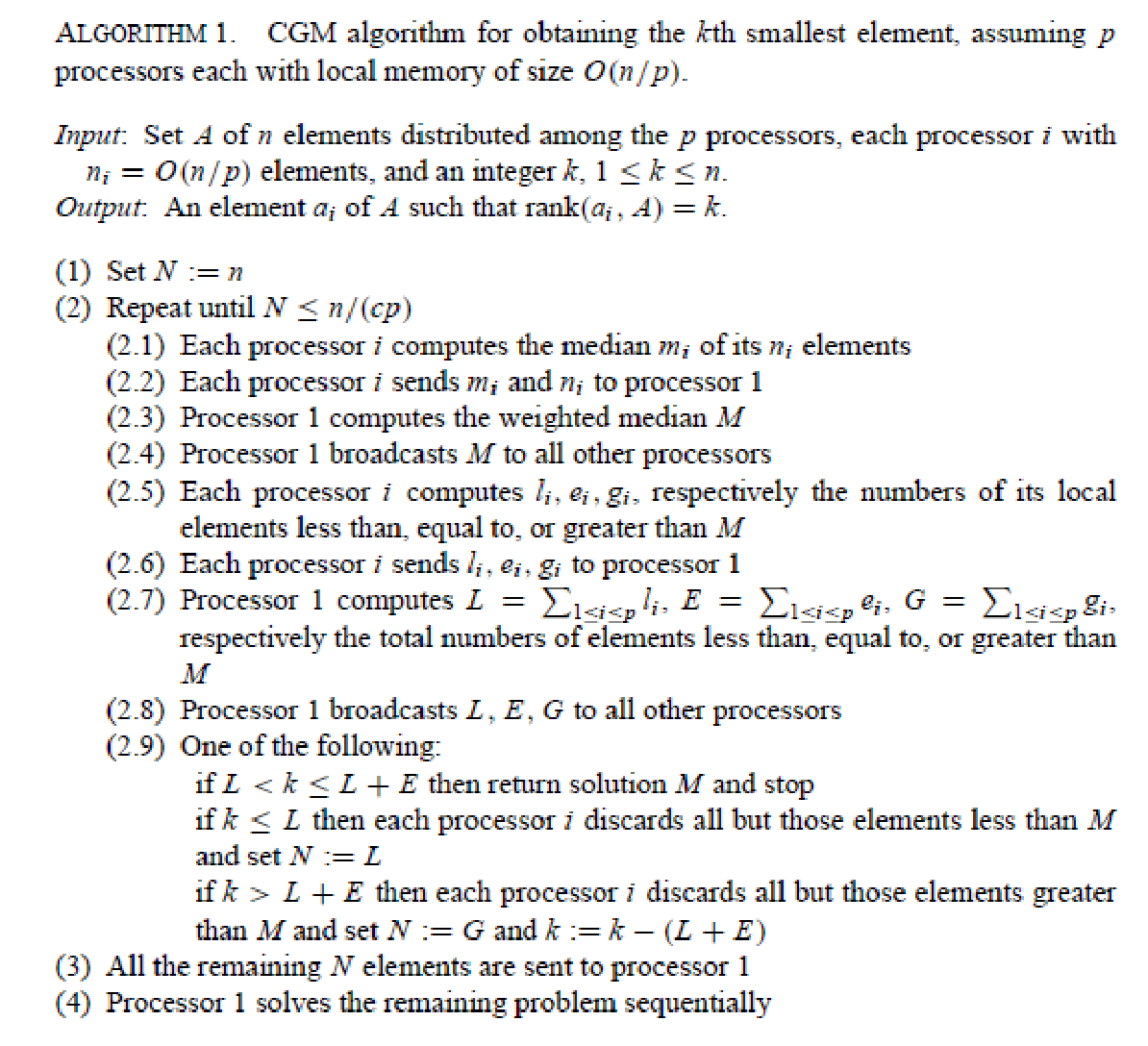}
\caption{The Saukas-Song algorithm of~\cite{SS99}}
\label{alg-fig}
\end{figure}


\begin{thm}
{\rm \cite{SS99}}
\label{reduction}
After each performance of the body of the loop in step~(2), the number of
elements remaining under consideration is reduced by at least one quarter. $\Box$
\end{thm}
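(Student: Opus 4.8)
The plan is to track the number $N$ of elements still under consideration at the start of an iteration of the loop in step~(2), and to show that the pivot chosen during that iteration---the weighted median $m$ of the local medians---eliminates at least $N/4$ of these elements regardless of which branch the algorithm takes. First I would fix notation: suppose that at the start of the iteration processor $P_i$ holds $n_i$ of the surviving elements, so that $N = \sum_{i=1}^p n_i$, and let $m_i$ denote the local median that $P_i$ computes from its $n_i$ elements. The pivot $m$ is the weighted median of the $m_i$ with weights $n_i$; by definition of the weighted median, the processors whose local median satisfies $m_i \le m$ carry total weight at least $N/2$, and likewise the processors with $m_i \ge m$ carry total weight at least $N/2$.

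The heart of the argument is a counting estimate. Because $m_i$ is a median of the $n_i$ elements held by $P_i$, at least $n_i/2$ of those elements are $\le m_i$; for every processor with $m_i \le m$ these elements are then also $\le m$. Summing over the processors with $m_i \le m$ and invoking the weighted-median bound $\sum_{i:\,m_i \le m} n_i \ge N/2$, I obtain that at least $N/4$ of the surviving elements are $\le m$. The symmetric computation, using the processors with $m_i \ge m$ and the fact that at least $n_i/2$ of $P_i$'s elements are $\ge m_i$, shows that at least $N/4$ of the surviving elements are $\ge m$.

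Finally I would close the argument with the rank case analysis that governs the loop body. The algorithm counts the number $L$ of surviving elements strictly less than $m$ and the number $E$ of those equal to $m$, and compares the target rank $k$ against $L$ and $L+E$. If $k \le L$, the sought element lies strictly below $m$, so the algorithm discards every element $\ge m$, of which there are at least $N/4$ by the first estimate; if $k > L+E$, it discards every element $\le m$, again at least $N/4$ by the symmetric estimate; and if $L < k \le L+E$ the sought element equals $m$ and the loop terminates. Hence in every case in which the body of the loop runs without terminating the search, at least one quarter of the surviving elements is removed, which is exactly the claim.

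The step I expect to be the main obstacle is making the weighted-median bound and the counting estimate precise in the presence of ties and of floor/ceiling rounding: one must check that the ``at least half'' guarantee for a local median and the ``total weight at least $N/2$'' guarantee for the weighted median compose cleanly to the stated $N/4$ bound, and that the set discarded in each branch of the rank comparison is exactly one of the two sets shown to have size at least $N/4$, rather than some intermediate set for which only a weaker bound is available.
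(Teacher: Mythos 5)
The paper states this theorem without proof, citing~\cite{SS99} and closing it with $\Box$, so there is no in-paper argument to compare against; your proof is the standard weighted-median-of-medians counting argument, which is essentially the one given in the cited source. It is correct: the weighted-median guarantee (total weight at least $N/2$ on each side) composes with the local-median guarantee (at least $n_i/2$ of $P_i$'s elements on each side of $m_i$) to show at least $N/4$ of the surviving elements lie on each side of the pivot, and in each non-terminating branch of the rank comparison ($k \le L$ or $k > L+E$) the discarded set is exactly one of these two sets, so the ties/rounding concern you flag resolves cleanly under any standard convention for medians and weighted medians.
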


\begin{cor}
\label{performances}
The number of performances of the body of the loop in step~(2) is $O(\log p)$.
\end{cor}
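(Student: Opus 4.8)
The plan is to convert the multiplicative decrease guaranteed by Theorem~\ref{reduction} into a geometric bound on the number of surviving elements, and then match that bound against the loop's termination threshold. First I would let $N_k$ denote the number of elements remaining under consideration after the body of the loop in step~(2) has been executed $k$ times, with $N_0 = n$. Theorem~\ref{reduction} asserts that each iteration discards at least one quarter of the current elements, so $N_k \le \frac{3}{4} N_{k-1}$ for every $k \ge 1$. Unrolling this recurrence immediately yields $N_k \le \left(\frac{3}{4}\right)^{k} n$.

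Next I would identify the condition under which the loop halts. The loop continues while the number of remaining elements is too large to be handled by a single processor; once that count falls to $O(n/p)$, the survivors can be gathered to one processor via Theorem~\ref{gather} (whose hypotheses require precisely $N = \Omega(p)$ and $N = O(n/p)$, the latter being nonvacuous since $n > p^2 \log p$ forces $n/p = \omega(p)$) and the selection completed sequentially. Thus the loop terminates as soon as $N_k \le n/p$.

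Combining these two facts, the loop is guaranteed to stop once $\left(\frac{3}{4}\right)^{k} n \le n/p$, equivalently once $\left(\frac{4}{3}\right)^{k} \ge p$. Solving for $k$ gives $k \ge \log_{4/3} p$, so after $\lceil \log_{4/3} p \rceil$ iterations the termination threshold is necessarily met. Since $\log_{4/3} p = \Theta(\log p)$, the number of executions of the loop body is $O(\log p)$, as claimed.

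I expect the main obstacle to be pinning down the exact termination threshold of the loop, since the bound $O(\log p)$ --- rather than the naive $O(\log n)$ one would get by shrinking all the way to $\Theta(1)$ remaining elements --- depends entirely on the loop exiting at $N = \Theta(n/p)$. The appearance of the hypothesis $N = O(n/p)$ in Theorem~\ref{gather} is the structural clue that fixes this threshold: gathering to a single processor is exactly what becomes feasible at $N = \Theta(n/p)$, and it is this threshold that produces the factor $\log p$ instead of $\log n$ in the count of iterations.
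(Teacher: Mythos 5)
Your proposal is correct and follows essentially the same argument as the paper: unroll the quarter-reduction of Theorem~\ref{reduction} to get $N_k \le (3/4)^k n$, match this against the termination threshold $n/p$, and solve $(4/3)^k \ge p$ to conclude $k = O(\log p)$. Your additional observation tying the $n/p$ threshold to the hypotheses of Theorem~\ref{gather} is a nice justification of a point the paper simply asserts, but it does not change the substance of the proof.
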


\begin{proof}
It follows from Theorem~\ref{reduction} that after $k$ performances of the body of
the loop, the number of elements remaining under consideration is at most
$(3/4)^k n$. Since step~(2) terminates in the worst case when the number
of elements remaining under consideration is at most $n/p$, termination requires,
in the worst case, $(3/4)^k n \le n/p$, or $p \le (4/3)^k$. The smallest integer $k$ 
satisfying this inequality must therefore satisfy $k = \Theta(\log p)$. Since the 
loop could terminate after as little as 1 performance of its body, the assertion follows.
\end{proof}

\begin{thm}
\label{S-S}
Assume $n > p^2 \log_2 p$.
The Saukas-Song algorithm for the Selection Problem runs in
$\Theta(\frac{n \log p}{p})$ time on a $CGM(n,p)$ in the worst case. In the best case, the running time
is $\Theta(n/p)$. We may assume that at the end of the algorithm, every processor has the solution to
the Selection Problem.
\end{thm}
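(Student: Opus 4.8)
The plan is to account for the cost of a single execution of the loop body in step~(2), multiply by the number of executions supplied by Corollary~\ref{performances}, and add the cost of the terminal gather-and-select phase; the best case and the worst-case lower bound are then treated separately.

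First I would itemize the operations in one pass through the loop body and bound each. On a processor currently holding $m = O(n/p)$ surviving elements, computing a local (weighted) median costs $O(m) = O(n/p)$ by sequential selection. The $p$ local medians, one per processor, are gathered to a single processor; since $n > p^2 \log p$ gives $n/p > p \log p \ge p$, we have both $p = \Omega(p)$ and $p = O(n/p)$, so Theorem~\ref{gather} yields cost $\Theta(p)$. Computing the weighted median of these $p$ values sequentially costs $O(p)$, and broadcasting the resulting pivot to all processors costs $O(p)$ by Theorem~\ref{broadcast}. Each processor then counts its local elements below, equal to, and above the pivot in $O(n/p)$ time, and the $p$ resulting triples of counts are combined by a semigroup operation, which I would bound by $O(p)$ consistent with Theorem~\ref{semigroup} and Theorem~\ref{broadcast}. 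Since $n/p$ dominates $p$, the cost of one loop body is $\Theta(n/p)$.

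By Corollary~\ref{performances} the loop body runs $O(\log p)$ times, and each processor holds at most its original $\Theta(n/p)$ elements throughout, so the busiest processor does $O(n/p)$ local work per pass. The terminal phase, reached when at most $n/p$ elements remain, gathers these to one processor in $O(n/p)$ time by Theorem~\ref{gather}, selects sequentially in $O(n/p)$ time, and broadcasts the answer to every processor in $O(p)$ time by Theorem~\ref{broadcast} (dominated by $n/p$); this both establishes the claim that every processor holds the solution and contributes only $O(n/p)$. Summing yields the worst-case upper bound $O(\frac{n \log p}{p})$. The best-case bound follows because each processor must read its $\Theta(n/p)$ inputs (a trivial $\Omega(n/p)$ lower bound) and a single pass already costs $\Theta(n/p)$, so when the loop terminates after $O(1)$ passes the total is $\Theta(n/p)$.

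The main obstacle is the worst-case lower bound $\Omega(\frac{n \log p}{p})$, which I expect to require an \emph{explicit adversarial instance} rather than a counting argument. The subtlety is that if the surviving elements stayed evenly distributed, the per-pass local work would shrink geometrically and the total would collapse to $\Theta(n/p)$; the $\log p$ factor appears only because the algorithm never rebalances. I would therefore construct an input on which (i)~each pass discards close to the minimum guaranteed quarter, forcing $\Theta(\log p)$ passes, and (ii)~a distinguished processor retains $\Theta(n/p)$ surviving elements across all passes, so that each pass, whose duration is governed by the slowest processor, still costs $\Theta(n/p)$. Arranging the data so that the median-of-medians pivot repeatedly eliminates elements drawn only from the other processors while sparing the distinguished processor is the delicate step, and verifying (i) and (ii) simultaneously is where the real work lies.
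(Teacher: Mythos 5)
Your accounting of the loop body, the terminal gather-and-select phase, and the best case reproduces the paper's proof essentially step for step: the paper bounds the substeps exactly as you do (local linear-time selection $O(n/p)$; gather of medians, weighted median, and broadcast all $O(p)$; local counting $O(n/p)$; count combination $O(p)$ via gather, sequential semigroup, and broadcast), invokes Corollary~\ref{performances} for the iteration count, notes as you do that the algorithm never rebalances so a single processor may hold $\Theta(n/p)$ surviving elements in every iteration, and finishes with the same $O(n/p)$ gather, $O(n/p)$ sequential selection, and $O(p)$ broadcast of the answer. (You even verify the hypotheses $N=\Omega(p)$ and $N=O(n/p)$ of Theorem~\ref{gather}, which the paper leaves implicit.)

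The one place you diverge is the worst-case lower bound, and the comparison is instructive. You correctly observe that the claim ``worst case $\Theta(\frac{n\log p}{p})$'' needs a witness input on which $\Theta(\log p)$ iterations occur while some processor retains $\Theta(n/p)$ elements in each one, and you defer that construction as ``the real work.'' The paper never does this work either: it asserts, citing Saukas and Song's observation about uneven distribution, that in the worst case some processor could have $\Theta(n/p)$ data in each iteration, and it reads ``$\Theta(\log p)$ iterations in the worst case'' out of Corollary~\ref{performances}, which by itself only supplies the upper bound $O(\log p)$. So your proposal is no less complete than the paper's proof, and your instinct about what is missing is sound. For the record, the instance is not hard to exhibit: take all keys distinct, $k=1$, and give processor $P_1$ the $n/p$ smallest keys. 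Because the pivot $v$ is a weighted median of the local medians, at least a quarter of the $N$ surviving elements are $\le v$ and at least a quarter are $\ge v$; hence the number $L$ of survivors below $v$ satisfies $N/4 - 1 \le L \le 3N/4$, and with $k=1$ the algorithm always recurses on those $L$ smallest elements (early termination $L < k \le L+E$ would force $v$ to be the current minimum, which is impossible once $N>4$ since at least $N/4$ distinct keys lie at or below $v$). The surviving count therefore shrinks by a factor bounded between two constants per iteration, giving $\Theta(\log p)$ iterations, and at the start of every iteration the survivors are the $N > n/p$ smallest keys, which include all of $P_1$'s keys; so $P_1$ performs $\Theta(n/p)$ local work in each of $\Theta(\log p)$ synchronized rounds. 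This closes the gap that both you and the paper leave open.
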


\begin{proof}
We give the following argument.

\begin{itemize}
\item Clearly, step~(1) executes in $\Theta(1)$ time.

\item We analyze step~(2) as follows.
\begin{enumerate}
\item Step~(2.1) is executed by a linear-time sequential
      algorithm~\cite{Blum-etal,MB}. As Saukas and Song
      observed, our algorithm does not guarantee that the data being considered are
      evenly distributed among the processors throughout the repetitions of the loop
      body. In the worst case, some processor could have $\Theta(n/p)$ data in each
      iteration of the loop body. Therefore, this step executes in worst case
      $\Theta(n/p)$ time.
\item Step~(2.2) is performed by a gather operation. By Theorem~\ref{gather},
      this can be done in $\Theta(p)$ time.
\item Step~(2.3) is performed in $\Theta(p)$ time by a linear-time sequential
      algorithm.
\item Step~(2.4) is performed by a broadcast operation. By Theorem~\ref{broadcast}, this requires $O(p)$ time.
\item Step~(2.5) is performed by sequential semigroup (counting) operations performed
      by all processors in parallel, in linear time~\cite{MB}.
      As noted above, in the worst case a processor 
      could have $\Theta(n/p)$ data in each iteration of the loop body. 
      Therefore, this step executes in $O(n/p)$ time.
\item Step~(2.6) is performed by a gather operation. As above, this can be done in 
      $O(p)$ time.
\item Step~(2.7) is executed by semigroup operations in $\Theta(p)$ time.
\item Step~(2.8) is performed by a broadcast operation in $O(p)$ time.
\item In the worst case, Step~(2.9) requires each processor $P_i$ to discard $O(n/p)$ data items. This can be
      done as follows. In parallel, each processor $P_i$ rearranges its share of the data so that the undiscarded
      items are at the beginning of the segment of (a copy of) $list$ stored in $P_i$, using a sequential
      prefix operation in $O(n/p)$ time.
\end{enumerate}
Thus, the worst case time required for one performance of the body of the loop of step~(2) is
$\Theta(n/p + p) = \Theta(n/p)$. By Corollary~\ref{performances},
the loop executes its body $\Theta(\log p)$ times in the worst case. Thus, the loop
executes all performances of its body in worst case $\Theta(\frac{n \log p}{p})$ time.
\item Step~(3) is performed by a gather operation. Since we now have
      $N \le n/p$, Theorem~\ref{gather} implies this is done in $O(n/p)$ time.
\item Step~(4) uses a linear time sequential algorithm to solve the problem in
      $\Theta(N)=O(n/p)$ time.
\item Additionally, processor 1 can broadcast its solution to all other processors. By Theorem~\ref{broadcast},
      this requires $O(p)$ time.
\end{itemize}

Thus, the algorithm uses worst case $\Theta(\frac{n \log p}{p})$ time.

In the best case, the loop of step~(2) executes its body once, when in step~(2.9)
it is found that $L < k \le L+E$. In this case, step~(2) executes in
$\Theta(n/p)$ time, and the algorithm executes in $\Theta(n/p)$ time.
\end{proof}

\section{Expected-optimal selection algorithms}
In this section, we consider modification of 
algorithms for the selection problem on coarse grained multicomputers,
based on the familiar idea that if we know the distribution of data values, then,
with high probability, we can estimate the desired value to within a small interval and,
in so doing, eliminate (with high probability) most of the data values from consideration
as possible solutions. In the following, we assume the data values are uniformly
distributed over some interval.

\subsection{Useful formulas from probability and statistics}
We use the notation $Pr[H]$ for the probability of the event $H$, and
$Pr[D|C]$ for the conditional probability of the event $D$ given that $C$ occurs. We use
the notation $\overline{H}$ for ``not~$H$''.

Let $X_{(k)}$ be a random variable for the $k^{th}$ smallest member of a set $\{x_i\}_{i=1}^n$ of data
points uniformly distributed over the interval $[0,1]$. Then the expected value $E(X_{(k)})$
and variance $V(X_{(k)})$ of $X_{(k)}$
are given~\cite{Rice95} by the following.
\begin{equation}
\label{mean}
E(X_{(k)}) = k / (n+1)
\end{equation}
\begin{equation}
\label{variance}
V(X_{(k)}) = \frac{k(n+1-k)}{(n+1)^2 (n+2)}
\end{equation}

We recall also Chebyshev's Inequality:

\begin{thm}
{\rm ~\cite{Rice95}}
\label{Cheb}
Let $y$ be a random variable with expected value $E(y)=\mu$ and variance $V(y)=v$. Then, for any $t>0$,
\begin{equation}
\label{Cheb-ineq}
Pr \left[|y - \mu| > t \right] \le \frac{v}{t^2}.
\end{equation}
\end{thm}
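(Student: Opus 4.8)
The plan is to reduce Chebyshev's inequality to the elementary \emph{Markov inequality}, which bounds the tail of a nonnegative random variable by its mean: for any nonnegative random variable $Z$ and any $a>0$, one has $Pr[Z \ge a] \le E(Z)/a$. First I would establish this auxiliary bound, and then apply it to the squared deviation $Z=(y-\mu)^2$ with $a=t^2$.

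To prove Markov's inequality, I would write the expectation $E(Z)$ as an integral (or sum) of $Z$ against its distribution and discard the contribution coming from the region where $Z<a$. Since the integrand is nonnegative, discarding that region can only decrease the value, so $E(Z)$ is at least the integral over the region $\{Z \ge a\}$. On that region $Z \ge a$, so that integral is at least $a \cdot Pr[Z \ge a]$. Combining these gives $E(Z) \ge a \cdot Pr[Z \ge a]$, and dividing by $a$ yields the claim.

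Next I would apply this with $Z=(y-\mu)^2$ and $a=t^2$. The variable $Z$ is nonnegative, and by the definition of variance $E(Z)=E[(y-\mu)^2]=V(y)=v$, so Markov's inequality gives $Pr[(y-\mu)^2 \ge t^2] \le v/t^2$. The event $(y-\mu)^2 \ge t^2$ is precisely the event $|y-\mu| \ge t$, and since $\{|y-\mu|>t\} \subseteq \{|y-\mu| \ge t\}$ we obtain $Pr[|y-\mu|>t] \le Pr[|y-\mu| \ge t] \le v/t^2$, which is exactly inequality~(\ref{Cheb-ineq}).

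The argument is entirely routine, so there is no substantial obstacle; the only point requiring a little care is making the derivation cover both discrete and continuous $y$ uniformly. I would handle this by phrasing Markov's inequality at the level of the expectation, so that the same ``discard a nonnegative region'' step applies in either case. Alternatively, since the intended application is to the order statistic $X_{(k)}$, which is continuous, one could simply carry out the estimate with a density and remark that the discrete case is identical with sums replacing integrals.
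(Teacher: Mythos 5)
Your proof is correct: it is the standard derivation of Chebyshev's inequality from Markov's inequality applied to $Z=(y-\mu)^2$ with threshold $a=t^2$, and you handle the strict-versus-weak inequality in the event $\left\{|y-\mu|>t\right\}$ properly via monotonicity of probability. Note that the paper itself gives no proof at all --- Theorem~\ref{Cheb} is simply quoted from the statistics text~\cite{Rice95} as a known classical result --- so your argument supplies exactly the textbook proof that the citation points to, and there is no discrepancy of approach to report.
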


\subsection{Algorithm}
The solution to the Selection Problem proposed below assumes an existing solution $F$,
such as the Saukas-Song algorithm,
and therefore must satisfy the same or a stronger restriction on the number of processors
as the solution $F$.
We assume there is a function $B_F(p)$ associated with $F$ such that
$p^2 \le B_F(p) \le n$.
E.g., for the Saukas-Song algorithm we take $B_F(p) = p^2 \log_2 p$.
We will use a slightly stricter limitation,
\begin{equation}
\label{processorRestriction}
\mbox{For some } w > 2, ~B_F(p) = p^w \le n; \mbox{ equivalently, } n/p \ge n^{1 - 1/w}.
\end{equation}
The latter inequality derives from the former, since
\[ p^w \le n \Leftrightarrow p \le n^{1/w} \Leftrightarrow n/p \ge \frac{n}{n^{1/w}} = n^{1 - 1/w}.
\]
Note this also implies
\begin{equation}
    \label{pAndn/p}
    p \le n^{1/w} \le \mbox{ ~~~(since } w >2)~~~~n^{1-1/w} \le n/p.
\end{equation}

We use the fact that the real function $G(x) = x(n-x)$ is maximized at $x=n/2$, from which it
follows that
\begin{equation} 
\label{quadraticBound}
   x(n-x) \le G(n/2) =  n^2 / 4.
\end{equation}
Our algorithm has an
asymptotically optimal expected running time, $\Theta(n/p)$, 
and its worst case running time is that of $F$. 

\begin{thm}
Let $F$ be an algorithm to solve the Selection Problem on a $CGM(n,p)$, where
$p$ is restricted by~(\ref{processorRestriction}). Suppose $F$ has running time
$R_F(n,p) = O(\frac{n \log p}{p})$ (e.g., the Saukus-Song algorithm).
Suppose we are given a set $A$ of $n$ elements distributed $\Theta(n/p)$ per processor 
among the processors of a $CGM(n,p)$ 
and an integer $k$ such that $1 \le k \le n$.
Assume the key values of the elements of $A$
are uniformly distributed over an interval $[u,v]$. Then the $k^{th}$ smallest member of $A$ can be
found in expected $\Theta(n/p)$ time. The worst case running time is $\Theta(R_F(n,p))$.
\end{thm}

\begin{proof}
Without loss of generality, $[u,v]=[0,1]$.

If the values of $n$ and $p$ are not known, they can be computed and made know to all processors
via semigroup (counting) operations. By Theorem~\ref{semigroup}, this can be done in $\Theta(n/p)$ time.

Let $c$ be a small positive integer.

If $k \le c$ then we can find the desired result by $k$ performances of a minimum computation. I.e., we
do the following.
\begin{tabbing}
For \= $i=1$ to $k$  \\
\> Find an element $a_j \in A$ such that $a_j = \min\{a \in A\}$.\\
\> If $i=k$ \= then the desired result is $a_j$;
   else set $A = A \setminus\{a_j\}$.\\
End For
\end{tabbing}

By Theorem~\ref{semigroup} and~(\ref{pAndn/p}), the For loop executes 
in $\Theta(n/p)$ time, and by Theorem~\ref{broadcast},
the result can be distributed to all processors in $O(p)$ time. By~(\ref{pAndn/p}),
the case $k \le c$ is solved in $\Theta(n/p)$ time.

If $k \ge n-c$, then we do the following.
\begin{tabbing}
For \= $i=n$ to $k$ step $-1$ \\
\> Find an element $a_j \in A$ such that $a_j = \max\{a \in A\}$. \\
\> If $i=k$ then the desired result is $a_j$; else set 
   $A = A \setminus\{a_j\}$.\\
End For
\end{tabbing}

By Theorem~\ref{semigroup} and~(\ref{pAndn/p}), the For loop executes in $\Theta(n/p)$ time, and by 
Theorem~\ref{broadcast}, the result can be distributed to all processors in $O(p)$ time.
By~(\ref{pAndn/p}), the case $k \ge n-c$ is solved in $\Theta(n/p)$ time.

The remaining case is $c < k < n-c$.
Let
\begin{equation}
    \label{epsiEq}
    \varepsilon = n^{-1/3}.
\end{equation}
Proceed as follows.
\begin{itemize}
\item In $\Theta(1)$ time, each processor computes
\begin{equation}
\label{leftBound}
U = \max\{0, \frac{k}{n+1} - \varepsilon\}
\end{equation}
and
\begin{equation}
\label{rightBound}
V = \min\{1, \frac{k}{n+1} + \varepsilon\}.
\end{equation}
\item In parallel, each processor $P_j$ scans its portion of the data set $A=\{x_i\}_{i=1}^n$ to
      determine 
      \begin{itemize}
      \item $S_j$, the number of elements of $A$ in $P_j$ that are less than U; and
      \item $M_j$, the number of elements of $A$ in $P_j$ that are in $[U,V]$.
      \end{itemize}
      This can be done in $\Theta(n/p)$ time.
\item Gather the $S_j$ values to one processor and compute their sum, $S=\sum_{j=1}^p S_j$. Similarly,
      gather the $M_j$ values to one processor and compute their sum, $M=\sum_{j=1}^p M_j$. Broadcast
      the values $S,M$ to all processors. From Theorems~\ref{gather} and~\ref{broadcast}, we conclude
      that all this can be done in $\Theta(p)$ time.
\item If $S < k \le S + M$, then we have $X_{(k)} \in [U,V]$; and if further we have $M \le n/p$
      then the $k^{th}$ smallest member of $A$ is the $(k-S)^{th}$ smallest member of the subset
      $M'$ of $A$ consisting of members of $A \cap [U,V]$. Since $\#M'=M$, we can gather the elements
      of $M'$ to one processor in $\Theta(M)=O(n/p)$ time according to Theorem~\ref{gather}, and have that
      processor sequentially find the $(k-S)^{th}$ smallest member of $M'$ in $\Theta(M)=O(n/p)$ time,
      and broadcast the result to all processors in $O(p)$ time.
      
      Thus, if this case is realized, the running time of the algorithm is $\Theta(n/p)$,
      which is asymptotically optimal, since the optimal
sequential running time for solving the Selection Problem is $\Theta(n)$.
      
      This leaves the following cases to be considered.
      \begin{itemize}
      \item If $S < k \le S + M$ and $M > n/p$, then apply the algorithm~$F$ to 
            finding the $(k-S)^{th}$ smallest member of
             $T_1 = \{x \in A \, | \, S < x < S+M \}$. In the worst case, 
             $\#T_1 = \Theta(n)$ and the asymptotic running
            time of this step is $R_F(n,p)$.
      \item If $k \le S$ then apply the algorithm~$F$ to 
            finding the $k^{th}$ smallest member of
             $T_2 = \{x \in A \, | \, x < U \}$. In the worst case, 
             $\#T_2 = \Theta(n)$ and the asymptotic running
            time of this step is $R_F(n,p)$.
      \item Otherwise, $k > S+M$ and the desired value is the $(k-(S+M))^{th}$ smallest
            member of $T_3= \{ x \in A \, | \, x > V \}$.
            Use the algorithm~$F$ to find
            the $(k-(S+M))^{th}$ smallest member of $T_3$. In the worst case, $\#T_3 = \Theta(n)$ 
            and the asymptotic running time is $R_F(n,p)$.
      \end{itemize}
\end{itemize}

It remains for us to show that $\Theta(n/p)$ is the expected running time of the algorithm.

Let $C$ be the event
\[ C = [U \le X_{(k)} \le V].\] 
The event complementary to $C$ is
\[ \overline{C} = \left [ X_{(k)} < U  \right] \cup 
    \left[V < X_{(k)} \right].
\]
By~(\ref{leftBound}) and~(\ref{rightBound}), 
$\overline{C} \subset \left[|X_{(k)} - E(X_{(k)})| > \varepsilon \right]$,
so by~(\ref{mean}), (\ref{variance}), Theorem~\ref{Cheb}, (\ref{quadraticBound}), and (\ref{epsiEq}),
\begin{equation}
\label{Cbar}
Pr[\overline{C}] \le \frac{\frac{k(n+1-k)}{(n+1)^2 (n+2)}}{\varepsilon^2} \le
       \frac{(n+1)^2}{4 \varepsilon^2 (n+1)^2 (n+2)} = O(n^{-1/3}).
\end{equation}
Therefore,
\begin{equation}
\label{C-prob}
\lim_{n \to \infty} Pr[C] = 1.
\end{equation}

Let 
\[ D = \left [ \# \{ x \in A \, | \, U \le x \le v \} \le \frac{n}{2p} \right ].
\]
Notice
\begin{equation}
\label{faster}
    C \cap D \mbox{ is a subset of the set of occurrences for which we do not run } F.
\end{equation}
We have
\[ \overline{D} = \left[ \# \{x \in A \, | \, U \le x \le V \} > \frac{n}{2p}\right].\]
We consider the following cases.

\begin{itemize}
\item Suppose $0 < U$ and $V < 1$. Then
\[ [\overline{D} | C] \subset [X_{(k-\frac{n}{4p})} > U] \, \cup \, [X_{(k+\frac{n}{4p})} < V] =
\]
\[ \left [X_{(k-\frac{n}{4p})} - E(X_{(k-\frac{n}{4p})}) > U - E(X_{(k-\frac{n}{4p})}) \right ] 
    \, \cup \, 
    \left [ E(X_{(k+\frac{n}{4p})}) - V < E(X_{(k+\frac{n}{4p})}) - X_{(k+\frac{n}{4p})} \right ]
\]
\[  \subset \left [ |X_{(k-\frac{n}{4p})} - E(X_{(k-\frac{n}{4p})})| > U - \frac{k-\frac{n}{4p}}{n+1} \right ] 
    \, \cup \, 
    \left [ \frac{k+\frac{n}{4p}}{n+1} - V < | X_{(k+\frac{n}{4p})} - E(X_{(k+\frac{n}{4p})})| \right ].
\]
For large $n$, it follows from~(\ref{processorRestriction}), 
(\ref{epsiEq}), (\ref{leftBound}) and~(\ref{rightBound}) that 
\[
U - \frac{k-\frac{n}{4p}}{n+1} = \frac{n}{4p(n+1)} - \ n^{-1/3} = \frac{k+\frac{n}{4p}}{n+1} - V >0, 
\]
Therefore, Theorem~\ref{Cheb} yields
\[ Pr[\overline{D} | C] \le \frac{(k-\frac{n}{4p})(n+1-(k-\frac{n}{4p}))}{(n+1)^2 (n+2) (U - \frac{k-\frac{n}{4p}}{n+1})^2} +
\frac{(k+\frac{n}{4p})(n+1-(k+\frac{n}{4p}))}{(n+1)^2 (n+2) (\frac{k+\frac{n}{4p}}{n+1} - V)^2} \le
\]
(by (\ref{quadraticBound}))
\[  \frac{(n+1)^2}{4 (n+1)^2 (n+2) \left (\frac{n}{4p(n+1)} - \ n^{-1/3}  \right )^2} +
    \frac{(n+1)^2}{4 (n+1)^2 (n+2) \left (\frac{n}{4p(n+1)} - n^{-1/3}  \right )^2}
\]
\[ = \frac{2}{4(n+2)\left (\frac{n}{4p(n+1)} - n^{-1/3}  \right )^2} < ~~~
    \mbox{ (by (\ref{processorRestriction})) } ~~~
     \frac{1}{2(n+2)\left (\frac{n^{1-1/w}}{4(n+1)} - n^{-1/3}  \right )^2} 
\]
\[= O\left (n^{-(1-2/w)} \right ).
\]

\item If $U=0$, then by~(\ref{leftBound}), $\frac{k}{n+1} \le \varepsilon$, so 
by~(\ref{epsiEq}) and~(\ref{rightBound}), $V \le 2 \varepsilon$.
\[ \overline{D} = \left [ \# \{x \in A \, | \, x \le V \} > \frac{n}{2p} \right ] =
   \left [ X_{(\frac{n}{2p} + 1)} \le V \right ] \subset
   \left [ X_{(\frac{n}{2p} + 1)} \le 2 \varepsilon \right ] =
\]
\[ \left [ E(X_{(\frac{n}{2p} + 1)}) - 2 \varepsilon \le  E(X_{(\frac{n}{2p} + 1)}) - X_{(\frac{n}{2p} + 1)} \right ] \subset
\]
\[ \left [ E(X_{(\frac{n}{2p} + 1)}) - 2 \varepsilon \le | E(X_{(\frac{n}{2p} + 1)}) - X_{(\frac{n}{2p} + 1)} | \right ].
\]
Since we have
\[ E(X_{(\frac{n}{2p} + 1)}) - 2 \varepsilon = \frac{(\frac{n}{2p} + 1)}{n+1} - 2n^{-1/3} \ge 
\mbox{~~~(by (\ref{processorRestriction}))}
\]
\[\frac{(\frac{n^{1/2}}{2} + 1)}{n+1} - 2n^{-1/3} = \Theta(n^{-1/2}),
\] 
therefore, for large $n$, $E(X_{(\frac{n}{2p} + 1)}) - 2 \varepsilon > 0$, so by
Theorem~\ref{Cheb} we have
\[ Pr[\overline{D}] \le Pr \left [ E(X_{(\frac{n}{2p} + 1)}) - 2 \varepsilon \le | E(X_{(\frac{n}{2p} + 1)}) - X_{(\frac{n}{2p} + 1)} | \right ] \le
\]
\[ \frac{(\frac{\frac{n}{2p}+1}{n+1}) (n+1 - (\frac{\frac{n}{2p}+1}{n+1}))}{(n+1)^2(n+2) \left (\frac{(\frac{n}{2p} + 1)}{n+1} - 2n^{-1/3} \right )^2}.
\]
By (\ref{quadraticBound}), the numerator is at most $(n+1)^2/4$, so
\[ Pr[\overline{D}] \le \frac{1}{4(n+2)  \left (\frac{(\frac{n}{2p} + 1)}{n+1} - 2n^{-1/3} \right )^2} 
\]
\[ \mbox{ (by (\ref{processorRestriction})) } ~~~
   \frac{1}{4(n+2)  \left (\frac{(\frac{n^{1-1/w}}{2} + 1)}{n+1} - 2n^{-1/3} \right )^2} =
   O \left (  n^{-(1-2/w)} \right ).
\]
Hence 
\[ Pr[\overline{D}|C] = Pr[C \cap \overline{D}]/Pr[C] \le Pr[\overline{D}] / Pr[C] =
  O \left (  \frac{n^{-(1-2/w)}}{1} \right ) = 
\]
\[ O \left (  n^{-(1-2/w)} \right ).
\]
\item The case $V=1$ is symmetric with the case $U=0$ and similarly produces the result
      $Pr[\overline{D}|C]  =  O \left (  n^{-(1-2/w)} \right )$.
\end{itemize}
Thus, in all cases, 
\begin{equation} 
\label{DbarGivenC}
Pr[\overline{D}|C] =  O \left (  n^{-(1-2/w)} \right ).
\end{equation}
Therefore,
\begin{equation}
\label{allCasesDgivenC}
Pr[D|C] = 1 - Pr[\overline{D}|C] \to_{n \to \infty} 1.
\end{equation}
From~(\ref{C-prob}) and~(\ref{allCasesDgivenC}),
\[ Pr[C \cap D] = Pr[C] \, Pr[D|C] \to_{n \to \infty} 1.
\]
Let $R(A)$ denote the running time of this algorithm for the event A.
By~(\ref{faster}), $R(C \cap D) = \Theta(n/p) = O(R_F(n,p))$, and
the expected running time $T(n,p)$ of the algorithm is
\[ E(T(n,p)) = R(C \cap D) Pr[C \cap D] + R(\overline{C \cap D}) Pr[\overline{C \cap D}] =
\]
\[  O \left ( \frac{n}{p} \times 1 + R_F(n,p) \times Pr[\overline{C} \cup \overline{D}] \right ) =
   O \left ( \frac{n}{p} +\frac{n \log p}{p} \times (Pr[\overline{C}] + Pr[\overline{D}]) \right ) =
\]
\begin{equation}
\label{expectation}
\mbox{(by~(\ref{Cbar}))~~~ }  O \left ( \frac{n}{p} +\frac{n \log p}{p} \times 
    (n^{-1/3} + Pr[\overline{D}]) \right ).
\end{equation}
But
\[ Pr[\overline{D}] =  Pr[\overline{D} \cap C] +  Pr[\overline{D} \cap \overline{C}] \le
  \frac{Pr[\overline{D} \cap C]}{Pr[C]} + Pr[\overline{C}] =
\]
\[ Pr[\overline{D} | C] +  Pr[\overline{C}] = \mbox{~~~(by (\ref{DbarGivenC}) and (\ref{Cbar}))~~~}
O \left (  n^{-(1-2/w)} + n^{-1/3} \right ).
\]
It follows from~(\ref{expectation}) that $E(T(n,p)) = O(n/p)$, and since $\Theta(n/p)$ is
asymptotically optimal, $E(T(n,p)) = \Theta(n/p)$.
\end{proof}

\section{Further remarks}
We have given an asymptotic analysis of the running time of the Saukas-Song selection 
algorithm for coarse grained parallel computers, showing that this algorithm is
efficient but not asymptotically optimal. We have given another algorithm for the selection
problem on coarse grained parallel computers 
with an asymptotically optimal expected running time.


Suggestions and corrections offered by the anonymous reviewers are gratefully acknowledged.


\end{document}